\documentclass[%
 reprint,
 amsmath,amssymb,
 aps,
]{revtex4-2}
\usepackage{comment}
\usepackage{graphicx}
\usepackage{dcolumn}
\usepackage{bm}


\usepackage{qcircuit} 
\usepackage{caption}
\usepackage{subcaption}
\usepackage{braket}   
\usepackage{lmodern}
\usepackage{relsize} 

\usepackage{verbatim,multirow}
\usepackage{multirow}
\usepackage[utf8]{inputenc}
\usepackage[T1]{fontenc}

\usepackage{soul}

\newcommand{\cA}{\mathcal{A}}

\newcommand{\cC}{\mathcal{C}}

\newcommand{\cE}{\mathcal{E}}
\newcommand{\cF}{\mathcal{F}}

\newcommand{\cO}{\mathcal{O}}

\newcommand{\cR}{\mathcal{R}}
\newcommand{\cS}{\mathcal{S}}

\usepackage{graphicx,color,appendix,physics,wasysym,xcolor,xparse}
\usepackage{setspace,appendix} 
\usepackage{amsmath,amssymb,amsfonts,mathtools}
\usepackage{bm,bbm,euscript,braket}
\bibliographystyle{unsrt}
\usepackage{hyperref}
\hypersetup{colorlinks,linkcolor={red},citecolor={blue},urlcolor={blue}}

\newtheorem{Theorem}{Theorem}
\newtheorem{lemma}{Lemma}

\newenvironment{proof}{{\bf Proof:}}{\hfill$\square$}

\usepackage{tikz}

\begin{document}

\title{Noise-adapted qudit codes for amplitude-damping noise}

\author{Sourav Dutta}
\thanks{Both the authors contributed equally}
 \author{Debjyoti Biswas}
\thanks{Both the authors contributed equally}
 \author{Prabha Mandayam}
\affiliation{%
 Department of Physics, Indian Institute of Technology Madras\\
 Chennai, India - 600036
}%
\affiliation{
 Center for Quantum Information, Communication, and Computing, IIT Madras
}%

\begin{abstract}
Quantum error correction (QEC) plays a critical role in preventing information loss in quantum systems and provides a framework for reliable quantum computation. Identifying quantum codes with nice code parameters for physically motivated noise models remains an interesting challenge. While past work has primarily focused on qubit codes, here we identify a $[4,1]$ qudit error correcting code tailored to protect against amplitude-damping noise. We show that this four-qudit code satisfies the error correction conditions for all single-qudit and a few two-qudit damping errors up to the leading order in the damping parameter $\gamma$. We devise a protocol to extract syndromes that unambiguously identify this set of errors, leading to a noise-adapted recovery scheme that achieves a fidelity loss of $\cO(\gamma^{2})$. For the $d=2$ case, our QEC scheme is identical to the known example of the $4$-qubit code and the associated syndrome-based recovery.
We also assess the performance of this code using the Petz recovery map and note some interesting deviations from the qubit case. Finally, we generalize this construction to a family of $[2M+2, M]$ qudit codes that can approximately correct all the single-qudit and a few two-qudit amplitude-damping errors.
\end{abstract}

\maketitle

\section{Introduction}

Quantum computers have the remarkable potential to speed
up complex computational tasks by harnessing the princi-
ples of quantum mechanics. While most quantum devices
in operation have thus far used qubits—two-dimensional
quantum systems—as their fundamental building blocks of
information, there is a growing number of quantum devices
built upon higher-dimensional quantum systems \cite{qudits_comp1, qudit_comp2, qudit_comp3, qudit_jaynes_cummings, qudit1, qudit2, quditvqe}. These
d-level quantum systems—often called qudits—typically
have a larger information capacity and admit more complex
computations, making them a promising avenue for scaling up
quantum algorithms and achieving enhanced computational
power  \cite{qudit1, qudit2, quditvqe, qutrit}.

However, while moving to higher-dimensional systems
helps increase the efficiency of quantum algorithms, these
systems are also more susceptible to errors than qubits. Since
qudit systems also have more accessible levels, they are
more difficult to control compared to qubits. For example,
transmon-based qudit systems predominantly undergo multi-
level decay errors, whereas transmon qubits are mostly fragile
to single-level-decay error.

Quantum error correction \cite{terhal_qec} provides a framework by which one can systematically deal with noise in quantum systems and improve the reliability and accuracy of quantum computing devices. Conventional quantum error correction (QEC) codes, such as the surface codes~\cite{fowler2012} or stabilizer codes~\cite{gottesman} in general, were primarily designed to correct for arbitrary, independently occurring Pauli errors. These general-purpose QEC codes are governed by rigid constraints on the minimal resources required for achieving error correction. For example, the smallest QEC codes have a block length
of five, requiring at least five qudits to encode a single qudit
to correct for any arbitrary error on a single system~\cite{qudit_hamming, singleton}.

Physical realizations of qudits are often dominated by certain specific noise processes with their own error characteristics. For instance, the decoherence mechanism in superconducting qubits is dominated by amplitude-damping noise over other noise processes~\cite{sourenoise}. In such cases, it is possible to devise encoding schemes that aim to correct for specific errors and hence require fewer resources compared to general-purpose QEC codes~\cite{Leung}. This then leads to the idea of \emph{noise-adapted} quantum error correction schemes that are tailored to deal with specific noise models \cite{barnum2002, prabha, fletcher_channel}. Since noise-adapted codes achieve a degree of protection comparable to general-purpose QEC codes while being more resource-efficient, they are potentially good candidates for implementing on today's  NISQ (Noisy Intermediate-Scale Quantum) devices~\cite{preskill2018}. Noise-adapted QEC protocols have been developed for biased noise models, where some Pauli errors are more likely to occur than others~\cite{preskill_biased,bias_camp, liang_jiang_biased, s_puri_bias}.


Going beyond Pauli noise channels, the amplitude-damping channel is an important example of a  noise model that has been well studied from the perspective of noise-adapted QEC. Apart from the $4$-qubit codes that correct for single-qubit damping errors~\cite{fletcher_channel, jayashankar2020finding}, other examples of amplitude-damping codes constructed in the literature include qudit codes that correct for single-qubit damping errors~\cite{Addnoise, wasilewski} and permutation-invariant codes that use $t^{2}$ qubits to approximately correct for $t$ damping errors~\cite{permutation_AD}. Extending the idea of encoding qubits into qudits, the class of bosonic codes, including binomial codes \cite{bin_code_vvalbert, bosonics_vva} and cat codes \cite{cat_1, cat_2,cat_3,s_puri, s_puri_1, bosonics_vva, review_bose_code} are primarily designed to correct for single-damping errors which arise due to single photon loss processes.
Qudit stabilizer codes that correct for amplitude-damping noise on qudits have also been developed from classical asymmetric codes~\cite{mg}.

In this article, we first identify a four-qudit code that can
protect against amplitude-damping noise. We show that this
code satisfies a set of approximate QEC conditions for all single-qudit damping errors and a few two-qudit damping errors. We observe that the well-known four-qubit Leung code \cite{Leung} is a special case of this code construction when the local
dimension is two.

Secondly, we propose a noise-adapted recovery procedure for the four-qudit code, tailored to detect and correct for amplitude-damping noise, via a syndrome-based approach. 
Similar to the four-qubit amplitude damping code, one needs extra measurements~\cite{Aj} to identify the errors uniquely when the local dimension $d$ is less than or equal to four. 
Our findings demonstrate that two stabilizer measurements are indeed adequate for correcting single-qudit damping errors up to $\cO(\gamma^2)$ when $d\geq5$. Finally, we also present a quantum circuit implementation of our syndrome-based recovery and compare its performance with the near-optimal recovery map known as the Petz recovery\cite{barnum2002,junge2018,prabha}.
It is interesting to observe that, in contrast to the qubit scenario, the syndrome-based recovery outperforms the Petz recovery in the qutrit case.

The rest of the paper is organised as follows. In Sec.~\ref{sec:prelim}, we briefly review the structure of amplitude-damping noise for qudits and discuss the approximate quantum error correction conditions which form the basis for the code construction.
In Sec.~\ref{sec:4quditcode}, we describe the four-qudit approximate QEC code for amplitude damping noise and demonstrate a syndrome-based recovery scheme for this code in Sec.~\ref{sec:stabilizer for the code}. We quantify the performance of the qutrit code with different recovery operations in Sec.~ \ref{sec:qutrit}. In Sec.~\ref{section:generalization}, we generalize the four-qudit code to a family of $[2M+2, M]$ qudit error correcting codes for correcting amplitude-damping noise up to $\cO(\gamma)$. 
Finally, we conclude in Sec.~\ref{sec:conclusion} with a summary and future directions.

\section{Preliminaries}\label{sec:prelim}
\subsection{Approximate quantum error correction}
Quantum states are fragile and susceptible to noise arising due to unwanted interactions with a bath or environment. The effect of noise can be modelled as a completely positive trace preserving (CPTP) map $\cE$ -- commonly referred to as a \emph{quantum channel} -- whose action on any state $\rho$ is described by a set of \emph{Kraus operators} $\{E_{k}\}$, such that $\cE(\rho) = \sum_{k}E_{k}\rho E_{k}^{\dagger}$. The map $\cE$ is guaranteed to be trace-preserving so long as the operators $E_{k}$ satisfy $\sum_{k}E_{k}^{\dagger} E_{k} = I$.

The central idea in quantum error correction (QEC) is to encode the information contained in the physical  Hilbert space into a subspace of a larger Hilbert space, known as the codespace. In order to be able to correct for a noise channel $\mathcal{E}$ represented by a set of $N$ Kraus operators $\{E_k\}_{a=1}^N$, the codespace must satisfy specific algebraic conditions known as the Knill-Laflamme conditions~\cite{KLCondition}, stated as follows. A codespace $\mathcal{C}$ defined as the span of the quantum states $\{|i_L\rangle\}_{i=1}^{dim(\mathcal{C})}$ corrects the errors associated with operators ${E_k}$ if and only if,
\begin{equation}
    \langle i_L|  E_k^{\dagger} E_l |j_L \rangle = c_{kl} \delta_{ij} \qquad \forall~~ k, l, \label{a1}
\end{equation}
for every pair of codewords $|i_{L}\rangle, |j_{L}\rangle$. Here, $c_{kl}$ are elements of a Hermitian matrix, and are independent of $i, j$. The independence of $c_{kl}$ from $i$ and $j$ ensures that the errors are simply unitary deformations of the codespace, whereas the Kronecker delta in Eq.~\eqref{a1} implies that orthogonality of the error subspaces associated with distinct errors~\cite{nielsen}.

More generally, one may consider relaxing the perfect QEC conditions, leading to approximate quantum error correction (AQEC) codes~\cite{Leung, beny, prabha}. There are a few variants of AQEC conditions proposed in the literature, depending on how the exact QEC conditions are relaxed and the figure of merit used to benchmark the performace of the codes. The specific form of the AQEC conditions that we use here is due to~\cite{beny}, where the Knill-Laflamme conditions in Eq.~\eqref{a1} are modified as follows. A quantum code $\cC$ satisfying 
\begin{equation}
    \langle i_L|  E_k^{\dagger} E_l |j_L \rangle = c_{kl} \delta_{ij} + \langle i_L|  B_{kl} |j_L \rangle\qquad \forall~~ k, l, \label{eq:appx_cond}
\end{equation}
for every pair of codewords $|i_{L}\rangle, |j_{L}\rangle$, can correct the errors $\{E_{k}\}$ up to $\cO(\epsilon^t)$ if and only if the perturbative term in Eq.~\eqref{eq:appx_cond} is bounded as
\begin{equation}\label{eq:beta}
    \langle i_L|  B_{kl} |j_L \rangle \leq \cO(\epsilon^{t+1}).
\end{equation} 
Clearly, the code $\cC$ becomes a \emph{perfect} QEC code for the noise channel $\cE$ when $\langle i_L|  B_{kl} |j_L \rangle$ in \eqref{eq:appx_cond} goes to zero. Indeed, if $\bra{i_L} B_{kl}\ket{j_L} $ vanishes for all $i\neq j$, then the orthogonality condition in Eq.~\eqref{a1} is satisfied. In that case, there exists a trace non-increasing completely positive map, that can approximately correct the effect of the noise channel $\cE$ on the code space $\cC$~\cite{Leung,cafaro}.

\subsection{\label{sec: AmpDamp}Amplitude-damping noise}
The amplitude-damping (AD) noise channel models the phenomenon of energy dissipation in open quantum systems~\cite{nielsen}. It is known to be one of the most commonly occurring and oftentimes dominant noise process for several physical realizations of qubits \cite{dom_ampd}. 
AD errors arise due to the loss of energy from the system to its surrounding environment, leading to loss of coherence as the probability of finding the qubit in the excited state diminishes. 
The strength of AD noise is characterized by $\gamma$, the probability of the qubit getting \emph{damped}, that is, decaying from the ground state from the excited state. Correspondingly, $1-\gamma$ is the probability that the qubit remains in the excited state.

At zero temperature, this damping process for qubits is characterized by the following pair of Kraus operators, which describe a single-qubit amplitude-damping channel. 
\begin{align}\label{eq:Ad_qubit}
A_0&= \begin{pmatrix} 1 & 0 \\ 0 & \sqrt{1-\gamma}\end{pmatrix} \qquad A_1 =  \begin{pmatrix} 0 & \sqrt{\gamma} \\ 0 & 0\end{pmatrix} .
\end{align}
Going beyond qubits, energy dissipation in qudit systems ($d$-level quantum systems), can be described via a qudit amplitude-damping channel $\cA$, with $d$ Kraus operators $A_k$  {($0\leq k \leq d-1$)} of the form \cite{mg},
\begin{align}\label{eq:ad_multi}
A_k &= \sum\limits_{r=k}^{d-1}\, \sqrt{r \choose k }\sqrt{ (1-\gamma)^{r-k} \gamma^k}\ket{r-k}\bra{r},
\end{align}
where $\gamma$ is now the probability for a single damping event to occur. The quantum state $\ket{r}$ represents a system in the $r^{\rm th}$ excited state. For a $d$-level quantum system, the index $k$ runs from $0$ to $d-1$. 
The operator $A_{k}$ thus describes a $k$-level damping event that occurs with probability $\cO(\gamma^{k})$. In the rest of the article, we will refer to $A_{k}$ as a $k$-damping error with a corresponding damping strength of $\cO(\gamma^{k})$.

A schematic representation of the qubit and qutrit amplitude-damping channels is shown in Fig. \ref{fig:qudit_damping}. In the qutrit case, a single damping event causes the system to decay from the state $\ket{2}$ to the state $\ket{1}$,  without decaying any further. Thus the probability for this process is $\gamma(1-\gamma)$ up to a constant factor. We can thus obtain the Kraus operators for a qutrit amplitude-damping channel by substituting $d=3$ in Eq.~\eqref{eq:ad_multi}, as, 
\begin{align}\label{eq:qtr_ad}
    A_0 &= \ket{0}\bra{0}+\sqrt{1-\gamma} \ket{1}\bra{1} +(1-\gamma)\ket{2}\bra{2}\\
      A_1 &= \sqrt{\gamma}\ket{0}\bra{1}+\sqrt{2\gamma(1-\gamma)} \ket{1}\bra{2}\\
      A_2 &= \gamma \ket{0}\bra{2},
\end{align}
where $A_{0}$ is the no-damping error, $A_{1}$ corresponds to the single-damping error and $A_{2}$ is the two-damping error operator.

\begin{figure}[t]
    \centering
    \includegraphics[width=1\columnwidth]{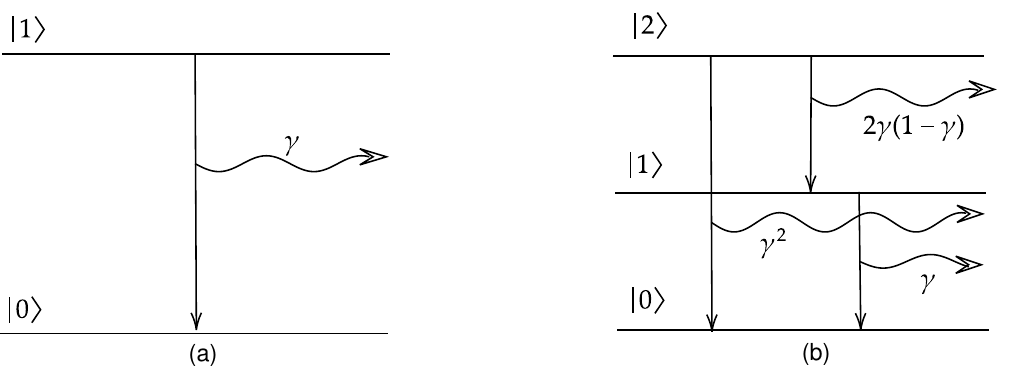}
    \caption{(a) Amplitude-damping process for a qubit system, where only one possible damping from $\ket{1}$ to $\ket{0}$ takes place with probability $\gamma$. (b) Amplitude-damping process for a qutrit system, where damping from $\ket{2}$ to $\ket{0}$ takes place with probability $\gamma^2$. The probability of damping from $\ket{2}$ to $\ket{1}$ and $\ket{1}$ to $\ket{0}$ takes place with probability $2\gamma(1-\gamma)$ and $\gamma$ respectively.  }
    \label{fig:qudit_damping}
\end{figure}

 In this work, we aim to construct quantum codes that can protect qudits from single and multi-level damping errors. It is evident that whenever $\gamma < 0.5$, which holds true for most of the realistic cases, the probability of no-damping and single-level damping is higher than multi-level damping. This implies that the single-level damping errors affect the fidelity of a quantum code the most.

\subsection{Noise-adapted recovery}\label{sec:NA_rec}

Before proceeding to  {the} code construction, we briefly review the known noise-adapted recovery schemes tailored to specific codes and noise processes. In standard QEC, the recovery operation is simply a unitary gate -- typically, a Pauli operator -- depending on the syndrome bits extracted. On the other hand, noise-adapted QEC requires a non-trivial recovery operation, which could, in general, be a quantum channel. Given a code $\cC$ and a noise channel $\cE$, there are three known approaches to constructing a noise-adapted recovery scheme.  

\begin{enumerate}
    \item \textit{Petz recovery:} This approach to noise-adapted recovery is based on a universal and near-optimal recovery channel, often known as the Petz map~\cite{barnum2002, prabha}. The Petz recovery specific to a code $\cC$ and adapted to a noise channel $\cE$ with Kraus operators $\{E_i\}^N_{k=1}$, is the CPTP map $\cR_{P}$ whose Kraus operators are given by,
    \begin{align}
        R^{(P)}_{k} = P E^{\dag}_k \cE(P)^{-1/2}, \label{eq:petz}
    \end{align}
    where $P$ is the projector onto the code space $\cC$. Recent works~\cite{gilyen2022, Biswas} have shown that it is possible to construct circuit implementations of the Petz recovery channel for arbitrary codes and noise processes.
    \item \textit{Leung recovery:} For a quantum code $\cC$ and a noise channel $\cE$ with Kraus operators $\{E_i\}^N_{k=1}$, the recovery map due to Leung \emph{et al..}~\cite{Leung} is constructed using the following set of Kraus operators. 
    \begin{equation}\label{eq:leung_kraus}
        R^{(L)}_k = P U_k^{\dag},
    \end{equation}
    where $U_{k}$ is the unitary operator obtained via polar decomposition of the operator $E_kP$ and $P$ is the projector onto the codespace. The map $\cR_{L}$ is trace non-increasing provided the error operators map the codespace to orthogonal subspaces. This means that an additional constraint $\bra{i_L}B_{kl}\ket{j_L}\propto \delta_{kl}$ should be imposed on the approximate QEC conditions in Eq.~\eqref{eq:appx_cond} for  $\cR_{L}$ to correspond to a physically realisable recovery map. Putting in this additional orthogonality constraint, it is easy to check that the set of operators $\{R^{(L)}_k\}$ along with the additional operator $P_E = I -\sum\limits_{k} R^{(L)\dag}_k R^{(L)}_k$ define a CPTP map corresponding to a valid recovery channel. 
    
    \item \textit{Cafaro recovery:} Given a code $\cC$ and a noise channel $\cE$ with Kraus operators $\{E_{k}\}$, the recovery map due to Cafaro \emph{et al..}~\cite{cafaro} is defined by the Kraus operators,
    \begin{align}\label{eq:cafaro_kraus}
        R^{(C)}_k &= \sum\limits_{i}\frac{\ket{i_L}\bra{i_L}E_k^{\dag}}{\sqrt{\bra{i_l}E_k^{\dag}E_k\ket{i_L}}}.
    \end{align} 
    This map is trace non-increasing if and only if $\bra{i_L} B_{kl} \ket{j_L} \propto \delta_{kl} \delta_{ij}$ in Eq.~\eqref{eq:appx_cond}. Note that this is a slightly stronger condition than the one required for the Leung recovery, since it has the additional $\delta_{ij}$ term. Similar to the Leung case, once the orthogonality condition is satisfied, the full CPTP recovery map is defined as $\{ R^{(C)}_k, I -\sum\limits_{k} R^{(C)\dag}_k R^{(C)}_k\}$.
\end{enumerate}

We can thus establish a hierarchy of noise-adapted recovery maps based on the constraints necessary to ensure that the maps are physical (completely positive and trace-preserving). The Petz recovery map in Eq.~\eqref{eq:petz} is a physical map for any noise channel $\cE$ and code $\cC$. On the other hand, the recovery maps defined using the operators in Eqs.~\eqref{eq:leung_kraus} and~\eqref{eq:cafaro_kraus} are physical if and only if the corresponding code and noise channel satisfy certain algebraic conditions. Furthermore, these two recovery maps turn out to be identical for codes and noise channels satisfying a specific form of the approximate QEC condition. This is formally stated in the following Lemma and proved in Appendix \ref{sec:app_leung}.

\begin{lemma}\label{lem: equiv}
Consider a noise channel $\cE$ and codespace $\cC$ with codewords $\{|i_{L}\rangle, i=1,2,\ldots, d\}$ that satisfy the following form of the AQEC conditions,  
\begin{equation}
    \langle i_L|  E_k^{\dagger} E_l |j_L \rangle = \left( c_{kk} \delta_{ij} + \langle i_L|  B_{kk} |j_L \rangle \right) \delta_{kl} ,\label{eq:aqec_leung)}
\end{equation}
for every pair of Kraus operators $E_{k}, E_{l}$. Then, the Leung recovery in Eq.~\eqref{eq:leung_kraus} and the Cafaro recovery in Eq.~\eqref{eq:cafaro_kraus} corresponding to this code and noise channel are identical if and only if, 
\begin{equation}
\langle i_L| E_{k}^{\dag} E_k |j_L \rangle \propto \delta_{ij}, \; \forall, i,j \in [1,d].\label{eq:aqec_orth}
\end{equation}
\end{lemma}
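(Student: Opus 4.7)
The plan is to bring both recovery maps into a common form involving $M_k := P E_k^\dagger E_k P$, and then read off the condition for their equality. The AQEC structure assumed in the lemma guarantees $P E_k^\dagger E_l P = 0$ for $k\neq l$, so only the diagonal-in-$k$ blocks $M_k$ can contribute to either map, and the proof reduces to a $k$-by-$k$ comparison of operators on the codespace.

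First, I would simplify the Leung operator. Writing the polar decomposition as $E_k P = U_k \sqrt{M_k}$ and taking the adjoint, one multiplies by the Moore--Penrose pseudoinverse $M_k^{-1/2}$ from the left to obtain $U_k^\dagger = M_k^{-1/2} P E_k^\dagger$, and hence
\begin{equation*}
    R^{(L)}_k \;=\; P U_k^\dagger \;=\; M_k^{-1/2} P E_k^\dagger,
\end{equation*}
using the fact that $M_k^{-1/2}$ is supported entirely on $\cC$, so $P M_k^{-1/2} P = M_k^{-1/2}$. The Cafaro operator in Eq.~\eqref{eq:cafaro_kraus} rewrites directly as $R^{(C)}_k = D_k^{-1/2} P E_k^\dagger$, with $D_k := \sum_i \langle i_L|M_k|i_L\rangle\,|i_L\rangle\langle i_L|$ the diagonal part of $M_k$ in the codeword basis.

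Next, I would compare. The equality $R^{(L)}_k = R^{(C)}_k$ is equivalent to $(M_k^{-1/2} - D_k^{-1/2})\,P E_k^\dagger = 0$; right-multiplying by $E_k P$ collapses this to $D_k^{1/2} M_k^{1/2} = M_k$. Taking the adjoint of this identity shows $[D_k^{1/2}, M_k^{1/2}]=0$, and cancelling a factor of $M_k^{1/2}$ on the support of $M_k$ then yields $D_k = M_k$. Thus $M_k$ must be diagonal in the codeword basis -- which is precisely the orthogonality hypothesis of the lemma. The converse is immediate: once $M_k = D_k$, the two pseudoinverses coincide and the recoveries are identical for every $k$.

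The one technical subtlety worth flagging is the possible non-invertibility of $M_k$, in which case the polar decomposition only yields a partial isometry and $M_k^{-1/2}$ should be read as a pseudoinverse throughout. I would handle this by restricting all identities above to $\mathrm{supp}(M_k)$, using that $\mathrm{Range}(P E_k^\dagger) \subseteq \mathrm{supp}(M_k)$ so that the equation $R^{(L)}_k = R^{(C)}_k$ is tested exactly on this subspace; with this caveat, the remaining steps are straightforward algebraic manipulations of positive operators on the codespace.
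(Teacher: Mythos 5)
Your proof is correct and follows essentially the same route as the one in Appendix~\ref{sec:app_leung}: polar decomposition reduces $R^{(L)}_k$ to $\left(PE_k^{\dagger}E_kP\right)^{-1/2}PE_k^{\dagger}$, and the comparison with the Cafaro operators becomes a comparison of $M_k^{-1/2}$ with $D_k^{-1/2}$ on the range of $PE_k^{\dagger}$. Your ``only if'' direction is in fact spelled out more explicitly than ours (which essentially reads the converse off from the definition of $U_k$): right-multiplying by $E_kP$ to obtain $D_k^{1/2}M_k^{1/2}=M_k$ and then extracting commutativity is a clean way to force $D_k=M_k$. The one loose end is the singular case you flag at the end: cancellation only yields $D_k=M_k$ on $\mathrm{supp}(M_k)$, and to conclude that $M_k$ is genuinely diagonal you still need the extra observation that $\mathrm{tr}\,D_k=\mathrm{tr}\,M_k$ together with $D_k\geq 0$ and $[D_k,M_k]=0$ forces $D_k$ to vanish on $\ker M_k$; with that one line added, the argument is complete.
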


\subsection{Fidelity measures}

The performance of a QEC protocol is typically quantified by one of the fidelity measures that quantify how close the initial encoded state is to the recovered state. Here, we work with the fidelity function $F$ based on the Bures metric, which is defined between a pure state $|\psi\rangle$ and a mixed state $\rho$ as~\cite{nielsen},
\begin{equation}
    F^{}(|\psi\rangle, \rho) = \langle \psi | \rho |\psi\rangle . \label{eq:fidelity}
\end{equation}
Since the fidelity is a function of the initial state, one may quantify the performance of a QEC scheme over the entire codespace is using the \emph{worst-case fidelity}, which is obtained by minimising the fidelity over all possible logical states. The worst-case fidelity for a noise channel $\cE$ and recovery channel $\cR$ acting on a codespace $\cC$ is defined as\cite{nielsen}
\begin{equation}
    \cF_{wc} = \min_{\ket{\psi} \in \cC} \bra{\psi} \cR \circ \cE (\ket{\psi} \bra{\psi}) \ket{\psi} \label{eq: wcf}
\end{equation}

Alternately, one could quantify the performance of a QEC code is by using the entanglement fidelity, defined as~\cite{nielsen},
\begin{equation}
    \cF_{ent} = \bra{\Psi} (\cR \circ \cE)\otimes I (\ket{\Psi}\bra{\Psi})\ket{\Psi}, \label{eq:ent_fid}
\end{equation}
where $\ket{\Psi}$ is a purification  of the maximally mixed state on the codespace $\rho_{L} = \frac{1}{d} \sum_{i=0}^{d-1}\ket{i_L}\bra{i_L}$. 
By evaluating the right hand side of Eq.~\eqref{eq:ent_fid}, we get the following useful expression for the entanglement fidelity.
\begin{equation}
    \cF_{ent} = \frac{1}{(\textrm{dim}(\cC))^2} \sum_{jk}\left(|\Tr_{\cC}R_j E_k|^2\right).
\end{equation}
Since the state $\ket{\Psi}$ is a purification of the maximally mixed state $\rho_{L}$ on the codespace, one may view the entanglement fidelity as a measure of the entanglement-preserving capability of a given QEC scheme.

\section{Approximate four-qudit code} \label{sec:4quditcode}
We first identify an approximate four-qu\emph{d}it quantum error correcting code for arbitrary values of $d$, which can correct all single-qudit and a few multi-qudit errors. To correct the errors due to the noise models discussed in Sec.~\ref{sec: AmpDamp},  the $4$-qudit codespace is constructed as the span of a set of $d$ codewords, defined as follows.
\begin{align}\label{eq:4qdt_code}
    \ket{m_L}&=\frac{1}{\sqrt{d}} \sum\limits_{i=0}^{d-1}\, \ket{i}_1 \ket{i}_2 \ket{(i+m)_d}_3 \ket{(i+m)_d}_4 . \
\end{align}
Here, $ m \in \lbrace 0, d-1\rbrace$ and the subscript $d$ indicates addition modulo $d$. 
\begin{figure}[t] 
\centering
\includegraphics[width=0.8\columnwidth]{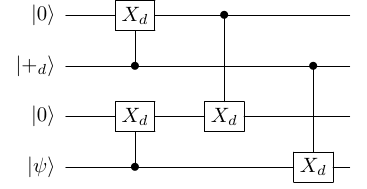}
      \caption{Encoding circuit for the approximate four-qudit code. The controlled-$X_d$ gate is defined in Eq.~\eqref{eq:cnotd} and  $\ket{+_d}= \frac{1}{\sqrt{d}} [\ket{0} +\ket{1} + \cdots +\ket{d-1}]$}.  
     \label{fig:enc_qtrt}
 \end{figure}
The encoding circuit for this qudit code is shown in Fig. \ref{fig:enc_qtrt}. The gate $X_{d}$ in the encoding circuit is simply the Pauli $X$ operator on a qudit system, given by,
\begin{equation}
    X_d = \sum_{k=0}^{d-1} \ket{(k+1)_d} \bra{k} . \label{eq:xd}
\end{equation}
This is simply a shift operator on the discrete set of states $\{|0\rangle, |1\rangle, \ldots, |d-1\rangle\}$. Correspondingly, the phase operator on a qudit system is defined as,
\begin{equation}
    Z_d = \sum_{k=0}^{d-1} e^{\frac{2\pi i k}{d}}\ket{k} \bra{k}. \label{eq:Zd}
\end{equation}
The controlled-$X_{d}$ gate, which is the $d$-dimensional analogue of the \textsc{cnot} gate is then defined as,
\begin{align}
    CX_d=\begin{bmatrix}
    \mathbf{I}_d & 0 & \cdots & 0 \\
    0 & \mathbf{X}_d & \cdots & 0 \\
    \vdots & \vdots & \ddots  & \vdots \\
    0 & 0 & \cdots & \mathbf{X}^{d-1}_d \end{bmatrix}\label{eq:cnotd}
\end{align}
Also, the state $\ket{+_d}$ is simply an equal superposition of the single-qudit basis states in the standard or computational basis. 

From  Eq.~\eqref{eq:4qdt_code}, it is evident that for $d=2$,  {this} code exactly matches the four-qubit code \cite{Leung}, tailored for qubit amplitude-damping noise. 
\begin{align}
    \ket{0_L}&= \frac{1}{\sqrt{2}}[\ket{0000}+\ket{1111}] \nonumber \\
    \ket{1_L}&= \frac{1}{\sqrt{2}}[\ket{0011}+\ket{1100}]
\end{align}
For $d\geq 3$, this code construction yields code parameters that have not been explicitly studied in the context of the amplitude-damping channel~\cite{mg} in previous constructions of stabilizer qudit codes.  We note in passing that there are
interesting connections between the $[4,1]_d$ code and the
distance-two surface codes without periodic boundary conditions. This connection is briefly explained in Appendix \ref{appendix:D}. Finally, we note that the code in Eq.~\eqref{eq:4qdt_code} can be realised as a subcode of the $[[4,2,2]]_{G}$ group-qudit code \cite{group_code}, where the group $G$ in our case is $\mathbb{Z}_d$.

We now show that similar to the $4$-qubit code, this $4$-qudit code satisfies the Knill-Laflamme condition in Eq.~\eqref{a1} for amplitude-damping noise, up to the first order in the decay strength $\gamma$. 
Specifically, we show that the codewords defined in Eq.~\eqref{eq:4qdt_code} satisfy the approximate QEC conditions in Eq.~\eqref{eq:appx_cond} for all single-qudit and some two-qudit errors, with the perturbation terms in Eq.~\eqref{eq:beta} being of $\cO(\gamma^{2})$. Note that the four-qudit amplitude-damping channel has Kraus operators of the form $A_{ijkl} = A_i \otimes A_j \otimes A_k \otimes A_l$, with the single-qudit error operators $A_{k}$ defined in Eq.~\eqref{eq:ad_multi}. The error $A_{ijkl}$ occurs with probability $\cO(\gamma^{i+j+k+l})$.

\begin{Theorem}\label{thm:qudit_qec}
The four-qudit code defined as the span of the codewords in Eq.~\eqref{eq:4qdt_code} satisfies the Knill-Laflamme conditions in Eq.~\eqref{a1} up to $\cO(\gamma)$, for the errors in the set 
   \begin{eqnarray}
       \cA_{\rm corr} &=& \{ \{A_{x000}, A_{0x00}, A_{00x0}, A_{000x} \; x\in \{1,\ldots, d-1\}\}, \nonumber \\  
       && A_{0000}, A_{1010}, A_{0101}, A_{0110}, A_{1001}\},
  \nonumber \end{eqnarray} 
comprising \emph{all} single-qudit errors and some two-qudit errors. Here, $A_{x000} = A_{x}\otimes A_{0}\otimes A_{0}\otimes A_{0}$ denotes a single-qudit $x$-damping error with a damping strength $\cO(\gamma^{x})$, whereas, $A_{1010} = A_{1}\otimes A_{0}\otimes A_{1}\otimes A_{0}$ denotes a two-qudit single-damping error of damping strength $\cO(\gamma^{2})$.   
\end{Theorem}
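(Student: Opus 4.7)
The plan is to verify Eq.~\eqref{eq:appx_cond} directly, pair-by-pair, for every $(A,A') \in \cA_{\rm corr}\times\cA_{\rm corr}$, exploiting the highly constrained form of the codewords in Eq.~\eqref{eq:4qdt_code}. The first step is to write $A_x$ from Eq.~\eqref{eq:ad_multi}, to leading order in $\gamma$, as $\gamma^{x/2}$ times a deterministic shift $\ket{r}\mapsto\sqrt{\binom{r}{x}}\,\ket{r-x}$, with the $(1-\gamma)^{(r-x)/2}$ prefactor contributing only sub-leading corrections. Applying a four-qudit error $A_{ijkl}$ to $\ket{m_L}$ then produces a weighted superposition of basis states $\ket{a-i,\,a-j,\,(a+m)_d-k,\,(a+m)_d-l}$ for admissible $a$. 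Each error in $\cA_{\rm corr}$ leaves a distinctive ``signature'' on these populated basis states, encoded in the pairwise differences between positions $1$ and $2$ and between positions $3$ and $4$: $A_{x000}$ produces ``position $2$ exceeds position $1$ by $x$, and positions $3,4$ are equal,'' while each of the four allowed two-qudit errors (e.g.\ $A_{1010}$) produces one asymmetric shift on the first pair of qudits paired with one on the second.

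The second step is to verify that $\bra{m_L}A^{\dagger}A'\ket{n_L} = 0$ for every off-diagonal pair $A\neq A'$ in $\cA_{\rm corr}$. Since the support of $\bra{m_L}$ lies in basis states of the form $\ket{a,a,(a+m)_d,(a+m)_d}$, a nonzero inner product requires the signature of $A$ acting on $\ket{m_L}$ and that of $A'$ acting on $\ket{n_L}$ to coincide. A short case analysis shows that no two distinct errors in $\cA_{\rm corr}$ yield coincident signatures: the four single-qudit positions are clearly disjoint, and the four allowed two-qudit errors each pair one asymmetric shift in the first two qudits with one asymmetric shift in the last two, giving four mutually distinct patterns that are also disjoint from every single-qudit signature and from $A_{0000}$. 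This step also clarifies why the ``paired-symmetric'' two-qudit errors such as $A_{1100}$ and $A_{0011}$ must be excluded from $\cA_{\rm corr}$: they preserve the within-pair equalities of the codewords and effectively implement the logical rotation $\ket{m_L}\mapsto\ket{(m+1)_d L}$, violating the orthogonality part of the Knill--Laflamme condition.

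The third step is the diagonal case $A = A'$. Here $A^{\dagger}A$ is a tensor product of diagonal operators $A_x^{\dagger}A_x$, so $\bra{m_L}A^{\dagger}A\ket{n_L}$ collapses to $\delta_{mn}$ times a sum over $a$ of combinatorial and $(1-\gamma)^{\cdot}$ factors. For the single-damping errors $A_{x000}$ with $x=1$ and their rotations, expanding the $(1-\gamma)^{\cdot}$ prefactors shows that the $\cO(\gamma)$ coefficient equals $\gamma(d-1)/2$ independently of $m$ (using the fact that $\{(i+m)_d\}_i$ is a permutation of $\{0,\ldots,d-1\}$), so any $m$-dependence enters only at $\cO(\gamma^{2})$ and is absorbed into the perturbation term $B_{kl}$ consistent with Eq.~\eqref{eq:beta} at $t=1$; for higher-order single-qudit errors ($x\geq 2$) and for all four two-qudit errors in $\cA_{\rm corr}$, the entire diagonal magnitude is already $\cO(\gamma^{2})$, so any $m$-dependence trivially fits within the perturbation bound. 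The main obstacle is the bookkeeping---a systematic enumeration of the $\cO(d^{2})$ relevant pairs together with careful expansion of the $(1-\gamma)^{\cdot}$ products in the diagonal case---but this workload is substantially reduced by the permutation symmetries of Eq.~\eqref{eq:4qdt_code}: the swaps $1\leftrightarrow 2$ and $3\leftrightarrow 4$, together with the combined swap of qudit pairs accompanied by the relabeling $m\mapsto (-m)_d$, collapse many otherwise-distinct cases into equivalent computations.
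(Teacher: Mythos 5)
Your proposal is correct and follows essentially the same route as the paper's own proof in Appendix~C: explicitly compute $A\ket{m_L}$ for each error, deduce orthogonality of distinct errors and distinct codewords from the disjoint supports of the resulting states (your ``signature'' bookkeeping is the paper's chain of Kronecker deltas in different clothing), and then expand the diagonal matrix elements in $\gamma$ to check that the $\cO(\gamma)$ coefficients are $m$-independent while $m$-dependence enters only at $\cO(\gamma^2)$. The symmetry reductions and the explicit explanation of why $A_{1100},A_{0011}$ act as logical shifts are nice organizational touches, but they do not change the substance of the argument.
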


\begin{proof} 
We merely outline the steps of our proof here and leave the details in the Appendix~\ref{sec:proof}.

To verify that the $4$-qudit code satisfies the Knill-Laflamme conditions up to $\cO(\gamma)$, we first explicitly evaluate the action of the single-qudit damping errors on the codewords, as given in Eqs.~\eqref{eq:3aa1}-~\eqref{eq:3aa4}. It is then easy to see by inspection that different single-qudit errors map each of the codewords to distinct, orthogonal states. Specifically, we show that the codewords in Eq.~\eqref{eq:4qdt_code} satisfy,
\begin{eqnarray}
\bra{m_L} A^{\dagger}_{x000} A_{000y} \ket{n_L} &\propto& \delta_{m,n} \delta_{x,0} \delta_{y,0} \nonumber \\
\bra{m_L} A^{\dagger}_{x000} A_{y000} \ket{n_L} &=& f_{x}(d)\cO(\gamma^{x})\delta_{m,n}\delta_{x,y} \nonumber \\
    \bra{m_L}A_{0000}^{\dag}A_{0000}\ket{n_L} &=& (1-2(d-1)\gamma + \cO(\gamma^{2}))\delta_{m,n} . \nonumber \\
    \label{eq:single_qudit}
\end{eqnarray}

Furthermore, we show by explicit calculation that the similar orthogonality conditions hold true for the two-qudit errors in the set $\cA_{\rm corr}$.

\begin{align}\label{eq:multi_error}
    \bra{m_L}A_{\alpha}^{\dag}A_{\beta}\ket{n_L} &= \frac{1}{d} \sum_{i=0}^{d-1} (i(i+m)_d \gamma^2 + \cO(\gamma^3))\delta_{m,n}\delta_{\alpha, \beta}\\
 &\quad \forall A_{\alpha},A_{\beta}\in \{A_{1010},A_{0101},A_{1001},A_{0110}\}.\nonumber
\end{align}
From Eqs.~\eqref{eq:single_qudit} and~\eqref{eq:multi_error}, we see that for every pair of errors $A_{\alpha}, A_{\beta} \in \cA_{\rm corr}$, the codewords of the $4$-qudit code satisfy, 
\[ \langle m_{L} | A^{\dagger}_{\alpha}A_{\beta}|n_{L}\rangle = \left( c_{\alpha,\beta} + g_{\alpha,\beta}(m,d)\right)\delta_{m,n}, \]
where, $c_{\alpha,\beta}$ are scalars of order $\cO(\gamma)$ that are independent of the choice of codewords $m,n$, and, $g_{\alpha,\beta}(m,d)$ are scalar functions of order $\cO(\gamma^{2})$ or higher that depend on the codeword index $m$ and the codespace dimension $d$.

We have thus shown that for the errors in the set $\cA_{\rm corr}$,  the $4$-qudit code defined in Eq.~\eqref{eq:4qdt_code} satisfy the form of the approximate QEC conditions stated in Eq.~\eqref{eq:appx_cond}, where the perturbative terms are orthogonal in the codeword labels ($m,n$) and are of order $\gamma^{2}$ or higher.
\end{proof}

Having verified that the $4$-qudit code satisfies the approximate QEC conditions, we may draw upon the results in~\cite{Leung, cafaro} to conclude that  {the} codes can correct all the errors in $\cA_{\rm corr}$ with a fidelity loss of $\cO(\gamma^2)$. The leading order $\cO(\gamma^2)$ term in the fidelity loss is, in fact, caused by two two-qudit error operators -- $A_{1100}$ and $A_{0011}$ -- that lead to logical errors. In what follows, we will explicitly construct a syndrome-based recovery scheme that retrieves the encoded information from the corrupted state with a fidelity loss of $\cO(\gamma^{2})$.

\section{Syndrome-Based Recovery for approximate four-qudit code} \label{sec:stabilizer for the code}

In principle,  {the} $[4,1]_d$ code can correct for all single-qudit errors and a subset of two-qudit errors that occur with a probability of order up to $\cO(\gamma^2)$, as shown in Theorem~\ref{thm:qudit_qec}. There exist certain two-qudit errors that are beyond the correction capabilities of this code, resulting in a fidelity loss proportional to $\gamma^2$. Since our fidelity is limited by this factor, we may choose to disregard single-qudit errors of order $\gamma^{3}$ or higher, whose contribution to the fidelity loss is anyway of $\cO(\gamma^3)$. 

We therefore construct a recovery scheme that can correct errors corresponding to the Kraus operators in the set
\begin{eqnarray}
  \cA^{(1)} &=& \{A_{0000}, A_{1000}, A_{0100}, A_{0010}, A_{0001}, \nonumber \\
            && A_{2000}, A_{0200}, A_{0020}, A_{0002}\} . \label{eq:IV1}
              \end{eqnarray}
In addition to these single-qudit errors, we show that our recovery scheme can also correct some two-qudit errors given by the Kraus operators
\begin{equation}
    \cA^{(2)} = \{A_{1010}, A_{1001}, A_{0110}, A_{0101}\}. \label{eq:IV2}
\end{equation} 

We first identify a set of stabilizer generators for  {the} $4$-qudit code, using the $d$-dimensional Pauli operators $X_{d}$ and $Z_{d}$ defined in Eqs.~\eqref{eq:xd} and~\eqref{eq:Zd}. The codewords in Eq.~\eqref{eq:4qdt_code} are stabilized by the group generated by,

\begin{align}\label{eq:stab_gen}
    \cS= \langle X_d X_d X_d X_d, Z_d^sZ_d^{d-s}I_d I_d, I_dI_dZ_d^sZ_d^{d-s} \rangle ,
\end{align}
where $s$ can take an integer value in the interval $[1, d-1]$ and is co-prime with $d$. In other words, we can say that the value of $s$ is chosen such that, greatest common divisor $\text{GCD}(d,s) = 1$. 
 One can always choose $s=1$ without the loss of generality, and we will stick to it for the rest of this discussion.

To detect the errors due to the single-qudit errors in the set $\cA^{(1)}$ and the two-qudit errors in the set $\cA^{(2)}$, we initially measure the stabilizer generators $Z_dZ_d^{d-1}I_dI_d$ and $I_d I_d Z_d Z_d^{d-1}$, and, denote the corresponding measurement results as $p_1 $ and $p_2$ respectively. Depending on the location and type of error, measurement of the $Z_{d}$-type stabilizers can yield one of the $d$ roots of unity as outcome. We label these outcomes as integers between $0$ and $(d-1)$, so that the syndromes $p_{1}, p_{2} \in [0, d-1]$. The syndrome values associated with the errors in $\cA^{(1)}\cup\cA^{(2)}$ are displayed in Table \ref{tab:primary_syndrome}.

\begin{table}[t]
    \centering
   \begin{tabular}{ |p{1.3cm}||p{1.3cm}|p{1.3cm}|p{1.3cm}|p{1.5cm}| }
 \hline
 \multicolumn{5}{|c|}{ Primary Syndromes ($p_1,p_2$)} \\
 \hline
 Error & d=3 & d=4 & d=5  & d \\
 \hline
 $A_{1000}$     & 2,0    & 3,0     & 4,0   & $d-1,0$ \\
 $A_{0100}$     & 1,0    & 1,0     & 1,0   & $1,0$\\
 $A_{2000}$     & 1,0    & 2,0     & 3,0   & $d-2,0$\\
 $A_{0200}$     & 2,0    & 2,0     & 2,0   & $2,0$\\
 $A_{0010}$     & 0,2    & 0,3     & 0,4   & $0,d-1$\\
 $A_{0001}$     & 0,1    & 0,1     & 0,1   & $0,1$\\
 $A_{0020}$     & 0,1    & 0,2     & 0,3   & $0,d-2$\\
 $A_{0002}$     & 0,2    & 0,2     & 0,2   & $0,2$\\
 $A_{1010}$     & 2,2    & 3,3     & 4,4   & $d-1,d-1$\\
 $A_{1001}$     & 2,1    & 3,1     & 4,1   & $d-1,1$\\
 $A_{0101}$     & 1,1    & 1,1     & 1,1   & $1,1$ \\
 $A_{0110}$     & 1,2    & 1,3     & 1,4   & $1,d-1$\\
 \hline
\end{tabular}
    \caption{Primary syndrome string for different errors occurring with probability up to $\cO(\gamma^2)$}.
    \label{tab:primary_syndrome}
\end{table}

 We observe that syndromes $p_1, p_{2}$ associated with different errors in $\cA^{(1)}\cup\cA^{(2)}$ are unique whenever $d \geq 5$. However, for $d=3,4$, there are pairs of errors that have the same $(p_{1}, p_{2})$ values.  
 Measuring the two stabilizer generators $Z_dZ_d^{d-1}I_dI_d$, $I_dI_dZ_dZ_d^{d-1}$ is thus not adequate to uniquely identify the location and type of error. We will therefore refer to $(p_{1}, p_{2})$ as \emph{primary} syndromes 
 and perform additional measurements in the case of the qutrit ($d=3$) code and ququad ($d=4$) code, to obtain \emph{secondary} syndromes values denoted as $s_1$ and $s_2$.
 
We explicitly discuss the syndrome-based recovery procedure for the qutrit case in the following section. The explicit recovery for the ququad code is discussed in Appendix \ref{appendix:a1}.

Interestingly, like the qutrit and ququad case, the $4$-qubit code also requires a two-step syndrome extraction procedure, with both primary and secondary syndromes required for unambiguous detection of single-damping errors. Indeed, the syndrome-based recovery for the qubit code is well-established~\cite{fletcher_channel}, and a fault-tolerant implementation of the same was given in \cite{Aj}. However, the two-step syndrome-based recovery for the $4$-qubit code cannot correct the two-qubit damping errors in the set $\cA^{(2)}$, as these errors will always annihilate the $\ket{1_L}$ state, making them non-recoverable. However, these errors can still be detected unambiguously by measuring the stabilizers $\{ZZII, IIZZ\}$.

\section{\label{sec:qutrit}The $4$-qutrit code}
 The codewords of the $4$-qutrit code are obtained by setting $d=3$ in Eq.~\eqref{eq:4qdt_code}.
\begin{align}
    \ket{0_L}&= \frac{1}{\sqrt{3}}\, [\ket{0000} + \ket{1111} +\ket{2222}] \\
    \ket{1_L}&= \frac{1}{\sqrt{3}}\, [\ket{0011} + \ket{1122} +\ket{2200}] \\
    \ket{2_L}&= \frac{1}{\sqrt{3}}\, [\ket{1100} + \ket{2211} +\ket{0022}]
\end{align}

As already shown in Theorem~\ref{thm:qudit_qec}, the code resulting from the span of these codewords satisfies the approximate QEC conditions up to $\cO(\gamma^{2})$. Specifically, the set of single-damping and two-damping errors in the sets $\cA^{(1)}$ and $\cA^{(2)}$ defined in Sec.~\ref{sec:stabilizer for the code}, map the codewords into mutually orthogonal states. Furthermore, for each of the errors $A_{\alpha} \in \cA^{(1)}\cup\cA^{(2)}$, the matrix elements $\langle i_{L} |A_{\alpha}^{\dagger}A_{\alpha}|i_{L}\rangle$ that feature in the QEC conditions are independent of the codeword index $i$ up to $\cO(\gamma)$, making this an approximate quantum error correcting code. 
 
\begin{table}[t!]
    \centering
   \begin{tabular}{ |p{1.75cm}||p{2.1cm}|p{2.1cm}| }
 \hline
\multicolumn{3}{|c|}{Table of syndromes for Qutrits }\\
\hline
 
 Errors & $p_1, p_2$ & $s_1, s_2$ \\
 \hline
 $A_{1000}$   & 2,0    & 1,0  \, {\rm or}~ 1,2  \\
 $A_{0100}$   & 1,0    & 1,0  \, {\rm or}~ 1,2   \\
 $A_{2000}$   & 1,0    & 0,0  \, {\rm or}~ 0,2   \\
 $A_{0200}$   & 2,0    & 0,0  \, {\rm or}~ 0,2   \\
 $A_{0010}$   & 0,2    & 0,1  \, {\rm or}~ 2,1   \\
 $A_{0001}$   & 0,1    & 0,1  \, {\rm or}~ 2,1  \\
 $A_{0020}$   & 0,1    & 0,0  \, {\rm or}~ 2,0   \\
 $A_{0002}$   & 0,2    & 0,0  \, {\rm or}~ 2,0  \\
 $A_{1010}$   & 2,2    & $\times$       \\
 $A_{1001}$   & 2,1    & $\times$       \\
 $A_{0101}$   & 1,1    & $\times$       \\
 $A_{0110}$   & 1,2    & $\times$       \\
 \hline
\end{tabular}
    \caption{Primary and secondary syndrome strings for errors with probability up to $\cO(\gamma^2)$ for the qutrit code.}
    \label{t2}
\end{table}

\subsection{Syndrome-based Recovery for the $4$-qutrit Code}
Looking at the primary syndrome values in Table~\ref{tab:primary_syndrome}, we observe that there are four specific situations in which the primary syndromes are unable to definitively distinguish the errors in the qutrit case, meaning that two different errors have the same primary syndromes ($p_1$ and $p_2$). If we consider the first two qutrits as one block and the last two qutrits as another block, the degeneracy in the primary syndromes arises due to a single damping error in one qutrit and a double-damping error in the other qutrit of the same block.

To distinguish between errors with overlapping values of $(p_{1}, p_{2})$, we measure two additional operators, namely $W_3 W_3 II$ and $IIW_3 W_3$ to obtain the so-called \emph{secondary} syndromes. Here,  $W_3$ is a single-qutrit operator defined as
\begin{align}
    W_3=\begin{bmatrix}
    1 &  0 & 0 \\
    0 & -1 & 0 \\ 
    0 &  0 & 1 \end{bmatrix},\label{zbar}
\end{align}
which leaves the quantum states $\ket{0}$ and $\ket{2}$ unchanged, but adds a global phase of $\pi$ to the state $\ket{1}$.
The choice of the operator $W_3$ to determine the secondary syndromes is motivated by its commutation properties shown below. 
\begin{align}\label{eq:anti_comm}
     A_k W_3 = 
\begin{cases}
    -W_3 A_{k},& \text{when } k = 1\\
    W_3 A_{k}, & \text{when } k = 0,2
\end{cases}
\end{align}
\begin{widetext}
\begin{center}
\begin{figure}[t]
\includegraphics[width = 0.9 \textwidth]{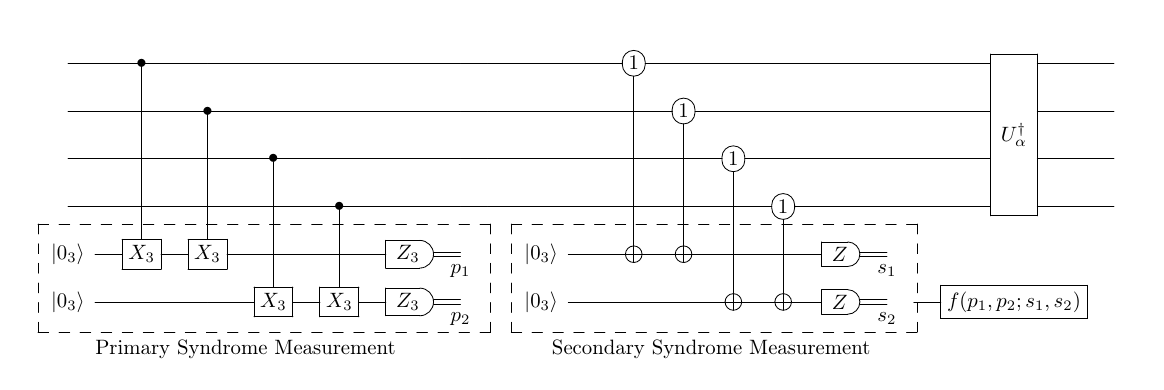}
\caption{Circuit diagram for the qutrit syndrome-based recovery. $p_1, p_2$ are the primary syndromes and $s_1,s_2$ are the secondary syndromes. Here $\ket{0_3}$ is the qutrit ground state. The controlled-$X_3$ gate is defined in Eq.\eqref{eq:cnotd}. The controlled gate denoted by the circle with a ``$1$'' in the secondary-syndrome measurement is the $C_3$ gate defined in Eq.~\eqref{cx3}. The function $f$ takes the syndromes $(p1, p2 ; s1, s2 )$ as input, identifies the error, and applies a suitable unitary operator $U_{\alpha}$ [as defined in Eq.\eqref{eq:uni_recov}] to recover the state.}

\label{fig:syn_circ}
\end{figure}
\end{center}
\end{widetext}

The controlled operation required for the secondary-syndrome measurements for $d=3$ is thus represented by

\begin{align}\label{cx3}
     C_3=\begin{bmatrix}
    \mathbf{I}_3 &  0 & 0 \\
    0 & \mathbf{X}_3 & 0 \\ 
    0 &  0 & \mathbf{I}_3 \end{bmatrix}.
\end{align}

The operator $C_3$ flips the target ancilla qutrit when the control qutrit is in the state $|1\rangle$. Thus, if adjacent data qutrits
are in the state $|1\rangle$,  the syndrome qutrits, which are initialized
to $|0\rangle
$ get mapped to the $|2\rangle$ state.

From Table \ref{t2}, we see that the secondary syndromes, in conjunction with the primary syndromes, can accurately identify the errors. The complete quantum circuit for extracting the primary and secondary syndromes for the $4$-qutrit code is illustrated in Fig.~\ref{fig:syn_circ}.

We note here that the secondary-syndrome measurements
of $W_3 W_3 II$ and $IIW_3 W_3$ can be performed via a controlled operation on ancillary controlled on the data qutrits.
As qutrit systems are more prone to decay compared to qubit
systems, qubit ancilla may provide better performance than
qutrit ancilla. Hence, one can try to use qubit-qutrit hybrid
setups \cite{bakkT,Yang,sdogra} for performing the recovery of the four-qutrit
code.

After identifying the errors using the primary and secondary syndromes, we map back the noisy state to the codespace by using a unitary recovery operator that is obtained as follows. Let $P_{\cC}$ denote the projector onto the $4$-qutrit subspace and $P_{\alpha}$ denote the projector associated with the syndrome measurement corresponding to the error $A_{\alpha} \in \cA^{(1)}\cup\cA^{(2)}$. Suppose the error $A_{\alpha}$ is detected, we construct the unitary operator $U_{\alpha}$ via polar decomposition of the non-unitary operator $A_{\alpha}P_{\cC}$, as,
\begin{align}
    A_{\alpha}P_{\cC} = U_{\alpha}\sqrt{P_{\cC}A_{\alpha}^{\dagger}A_{\alpha}P_{\cC}}. \label{eq:uni_recov}
\end{align}
The noisy state can be recovered by applying the inverse of the unitary operator $U_{\alpha}$ followed by the projector on the codespace $P_{\cC}$. The recovery operation associated with error $A_{\alpha}$ can thus be expressed as
\begin{equation}
    R_{\alpha} = P_{\cC} U_{\alpha}^{\dagger}P_{\alpha} \label{eq:Rk}.
\end{equation}
Note that $\sum_{\alpha} R_{\alpha}^{\dagger}R_{\alpha} \leq I $, since the projectors $\{P_{\alpha}\}$ are mutually orthogonal. Thus, the the set of Kraus operators $\{R_{\alpha}\}$, with the additional operator $\sqrt{I - \sum_{\alpha}R^{\dagger}_{\alpha}R_{\alpha}}$ constitutes a CPTP noise-adapted recovery channel. 

It is worth noting here that our syndrome-based recovery is the same as the noise-adapted recovery maps due to Leung {\emph et al.}~\cite{Leung} and Cafaro \emph{et al.}~\cite{cafaro} defined in Sec.~\ref{sec:NA_rec}. Since the $4$-qutrit code satisfies the orthogonal form of the AQEC conditions in Eq.~\eqref{eq:aqec_orth},  {it} satisfies the conditions of Lemma~\ref{lem: equiv}, making these two noise-adapted recovery maps identical in our case.

\begin{figure}[t]
  \centering
  \includegraphics[scale=0.64]{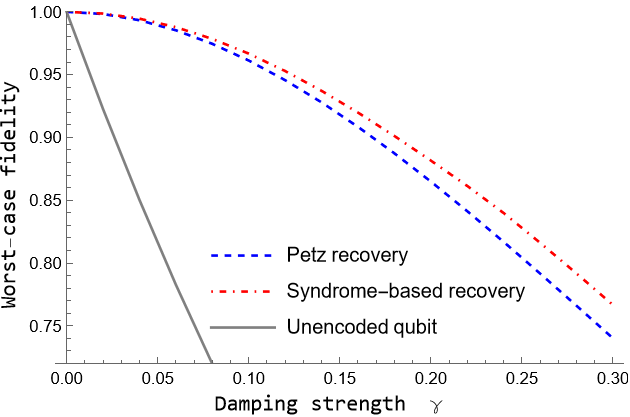}
  \caption{Worst-case fidelity vs the damping strength $\gamma$ for the $4$-qutrit code with the Petz recovery vs syndrome-based recovery.}
  \label{fig:qutrit_petz_leung}
\end{figure}

\subsection{Fidelity-loss and Comparison with the Petz Recovery}

We now compare the performance of our syndrome-based noise-adapted recovery with the universal near-optimal recovery map, namely, the Petz map defined via the Kraus operators in Eq.~\eqref{eq:petz}. 
We use the worst-case fidelity defined in \eqref{eq: wcf} to benchmark the performance of  {the} $4$-qutrit code in conjunction with each of these two recovery schemes. 

As shown in Fig.~\eqref{fig:qutrit_petz_leung}, the syndrome-based recovery performs better than the Petz recovery for the $4$-qutrit code subject to amplitude-damping noise for a broad range of damping strength $\gamma$. This observation is important as the Petz recovery is known to be near-optimal for the worst-case fidelity for any choice of code and noise~\cite{prabha}.
This behaviour also contrasts with the behaviour of the $4$-qubit code under amplitude-damping noise, where the Petz recovery outperforms the syndrome-based recovery. This deviation from the qubit case is probably due to the fact that the class of $4$-qudit code can correct for more errors when $d \geq 3$. Specifically, while for $d \geq 3$, the code can correct for the two-qudit errors in the set $\cA^{(2)}$ (see Eq.~\eqref{eq:IV2}),  for $d=2$, the $4$-qubit code can only detect but not correct for these two-qubit errors.

To gain a more detailed understanding, we have plotted the fidelity obtained using the 
syndrome-based recovery and the Petz recovery for different logical states subject to amplitude-damping noise. For the qutrit system, we can write any state in the codespace as,
\begin{align}
    \ket{\psi_L}& = \cos{\theta_1} \cos{\theta_2} \ket{0_L} + e^{i \phi_1} \cos{\theta_1} \sin{\theta_2} \ket{1_L}\nonumber\\
    & \qquad + e^{i \phi_2} \sin{\theta_1} \ket{2_L}. \label{eq: qutrit}
\end{align}

Fig.~\ref{fig:qutrit_petz_leung_state} shows the fidelity between the initial state and the recovered state for different values of $\theta_1$ and $\theta_2$, while setting the relative phases $\phi_1$ and $\phi_2$ to be zero. We first observe that the fidelity for the syndrome-based recovery is less sensitive to the choice of state in comparison to the Petz recovery. It is also clear from Fig. \ref{fig:qutrit_petz_leung_state}, that for a certain range of states, the Petz map performs better than syndrome-based recovery. 

\begin{figure}[t]
  \centering
  \includegraphics[scale=.67]{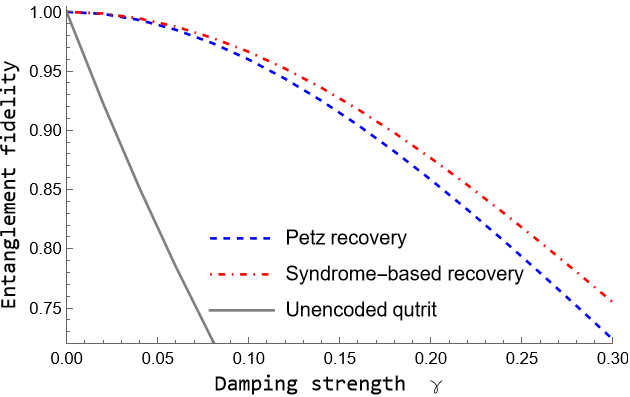}
  \caption{Entanglement fidelity vs the damping strength $\gamma$ for the $4$-qutrit code with Petz recovery vs syndrome-based recovery.}
  \label{fig:qutrit_ent_fid}
\end{figure}

Finally, we also estimate the entanglement fidelity for both recovery schemes. For the $4$-qutrit code, the Petz recovery yields an entanglement fidelity of 
\begin{equation}
    \cF_{Petz} = 1-4.52\gamma^2 - \cO(\gamma^3), \label{eq:ent_fidP}
\end{equation} whereas for the syndrome-based recovery, the entanglement fidelity is,
\begin{equation}
    \cF_{\rm syn} = 1-3.62\gamma^2 - \cO(\gamma^3). \label{eq:ent_fidS}
\end{equation}
Fig.~\ref{fig:qutrit_ent_fid} confirms that the syndrome-based recovery does achieve a higher entanglement fidelity than the Petz recovery for the $4$-qutrit code.

\subsection{Performance of noise-adapted qu$d$it code beyond $d=3$}
As noise-adapted QEC codes do not correct for arbitrary errors, it follows that these codes can, at times, beat the quantum  {Singleton} bound. Recall that the quantum {Singleton} bound gives us an upper bound on the number of qudits needed to protect a certain number of qudits from arbitrary errors. The formula that determines the {Singleton} bound for an $[[n, k, D]]_d$ code, where $d$ represents the local dimension of the quantum system and $D$ is the distance of the code which can correct up to $t = \lfloor \frac{D-1}{2} \rfloor$ qudits, is given by~\cite{singleton},
\begin{align}
     {n - k \leq 2(D-1)}.\label{eq:hamming}
\end{align}
We set $t$ and $k$ to be $1$ to know the minimum number of qudits required to protect a single logical qudit from any single-qudit physical error. 
It follows from the bound that we need a minimum of five qudits to protect one qudit. 
\begin{figure}[t]
  \centering
  \includegraphics[scale=0.52]{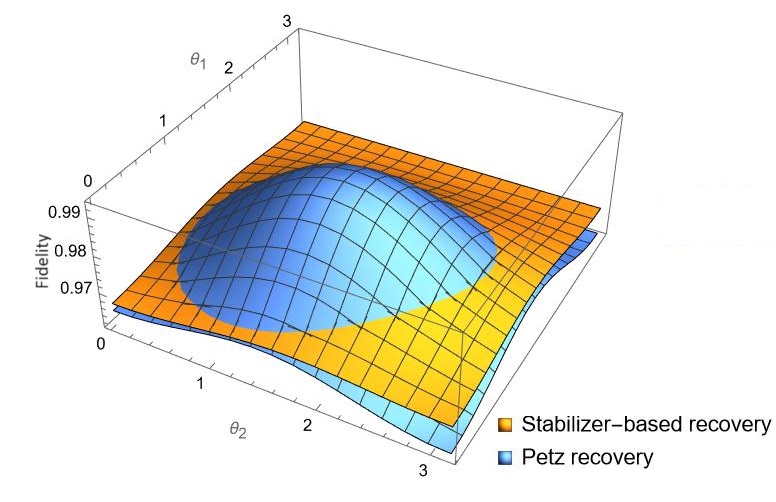}
  \caption{Fidelity of qutrit logical states parametrized by angles $\theta_1$ and $\theta_2$, with relative phases $\phi_1 = \phi_2 = 0$, for $\gamma=0.1$, using the Petz and syndrome-based recoveries. }
  \label{fig:qutrit_petz_leung_state}
\end{figure}

Hence,  {the class of $4$-qudit codes studied here} beats the quantum  {Singleton} bound for the  {qudit} case, similar to the $4$-qubit code~\cite{Leung}  {in the case of amplitude-damping noise}.

Increasing the local dimension $d$ of a quantum system increases the quantum information processing capacity exponentially. 
This naturally raises the question of how the performance of  {the class of codes defined in Eq.~\eqref{eq:4qdt_code}} scales with increasing the system dimension $d$. Based on our calculations in Appendix~\ref{sec:proof}, we can precisely calculate the extent to which  {this} class of $4$-qudit codes deviates from the Knill-Laflamme conditions for any $d$. As seen from Eqs.~\eqref{eq:no_damp}-~\eqref{eq:2_damp}, the leading order deviation is of $\cO(\gamma^{2})$ and the dimensional-dependent coefficient of this term grows as $d^{2}$ for all the correctable errors. The fidelity of the recovered state with respect to the encoded state is thus the form $\mathcal{F} \approx 1-\chi \gamma^2 + \mathcal{O}(\gamma^3)$,
where $\chi$ -- which captures the \emph{fidelity-loss} -- grows as $\cO(d^2)$. We also verify this numerically and plot the fidelity-loss coefficient $\chi$ as a function of $d$ in Fig. \ref{fig:fidlos}.

Thus, while  {the} class of $4$-qudit codes has the advantage that the number of stabilizer measurements remains fixed for $d \geq 5$, the performance of the code drops with increasing $d$, with the fidelity-loss scaling quadratically in $d$. 
\begin{figure}[t]
    \centering
    \includegraphics[scale=0.65]{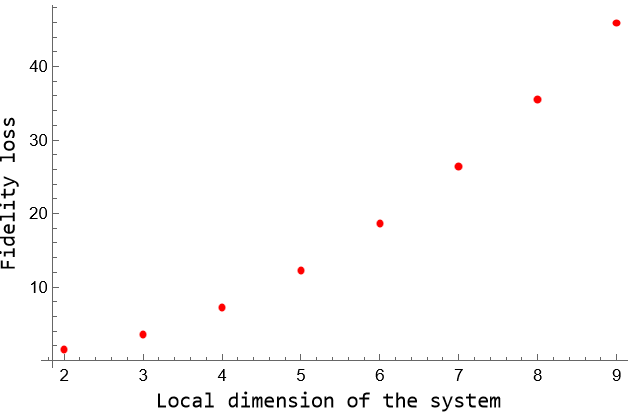}
    \caption{Fidelity loss ($\chi$) vs local dimension $d$ under the Petz recovery.}
    \label{fig:fidlos}
\end{figure}

\section{Generalization to a class of $[2M+2,M]_d$ qudit codes}\label{section:generalization}

Finally, we show that we can extend the $[4, 1]_d$ noise-adapted code to a family of $[(2M + 2), M]_d$ codes encoding $M$ qudits and capable of correcting AD noise to first order in noise strength. For $M = 1$, this family corresponds to the $[4, 1]_d$ noise-adapted qudit code discussed in the previous sections. A similar extension of the $[4, 1]_2$ qubit code to a family of $[(2M + 2), M]_2$ qubit codes for amplitude-damping noise has been described in \cite{fletcher_channel}.

The stabilizer generators for the $[(2M + 2), M]_d$ qudit codes are obtained as a straightforward extension of the stabilizer generators for the $[4, 1]_d$ code in Eq.~\eqref{eq:stab_gen} as
\begin{align}\label{eq:stab_M}
    \begin{split}
        \cS = \langle X_d&^{\otimes 2M+2}, Z_d^{d-s} Z_d^{s} I_d^{\otimes 2M}, I_d^{\otimes 2} Z_d^{d-s} Z_d^{s} I_d^{\otimes 2M-2},\\ \cdots,
    &I_d^{\otimes 2M-2} Z_d^{d-s} Z_d^{s} I_d^{\otimes 2}, I_d^{\otimes 2M+2} Z_d^{d-s} Z_d^{s} \rangle.
    \end{split}
\end{align}
Here, $X_d$ and $Z_d$ are the qudit Pauli operators defined in Eqs. \eqref{eq:xd} and \eqref{eq:Zd}. 
There is a single $X-$type stabilizer and $M+1$ Z-type stabilizers in the stabilizer generators, enabling it to encode $M$ qudits. 
The value of $s$ in Eq. \eqref{eq:stab_M} is chosen such that $s$ is co-prime with $d$.
The codewords for the qudit $[2M+2,M]$ code is given by
\begin{align}
    \begin{split}
        \ket{m_1,m_2, \cdots, m_M} = \frac{1}{\sqrt{d}} \sum_{i=0}^{d-1} \ket{i}_1 \ket{i}_2 \ket{i+m_1}_3 \ket{i+m_1}_4 \\  \ket{i+m_2}_5 \ket{i+m_2}_6 \cdots \ket{i+m_M}_{2M+1} \ket{i+m_M}_{2M+2}
    \end{split}  
\end{align}
As a concrete example, we present the codewords for the $M=2$ and $d=3$ case, corresponding to a $[6, 2]_3$ qutrit code, 
\begin{align}
    \begin{split}
        \ket{00}_L = \ket{000000} + \ket{111111} + \ket{222222} \\
        \ket{01}_L = \ket{000011} + \ket{111122} + \ket{222200} \\
        \ket{02}_L = \ket{000022} + \ket{111100} + \ket{222211} \\
        \ket{10}_L = \ket{001100} + \ket{112211} + \ket{220022} \\
        \ket{11}_L = \ket{001111} + \ket{112222} + \ket{220000} \\
        \ket{12}_L = \ket{001122} + \ket{112200} + \ket{220011} \\
        \ket{20}_L = \ket{002200} + \ket{110011} + \ket{221122} \\
        \ket{21}_L = \ket{002211} + \ket{110022} + \ket{221100} \\
        \ket{22}_L = \ket{002222} + \ket{110000} + \ket{221111}
    \end{split}
\end{align}
To correct all the single-qudit AD errors, we first need to perform a set of measurements leading to the secondary syndromes. 
The number of secondary measurements required to distinguish between the single-qudit errors is $M+1$. The structure of these secondary-syndrome measurement operators depends on the local dimension of the qudit. 
When $d=2$, the secondary syndrome measurement operators are $\{Z I^{\otimes 2M+1}, I^{\otimes 2} Z I^{\otimes 2M-1} , \cdots, I^{\otimes 2M+1} Z \}$. For $d=3$, the secondary syndrome measurement operators are $\{ W_3 W_3 I^{\otimes 2M}, I^{\otimes 2}W_3W_3I^{\otimes 2M-2}, \cdots, I^{\otimes 2M} W_3W_3\}$, where, $W_3$ is given in Eq. \eqref{zbar}.
For $d=4$, the extra measurements are given by $\{W_4 I^{\otimes 2M+1}, I^{\otimes 2} W_4 I^{\otimes 2M-1} , \cdots, I^{\otimes 2M+1} W_4 \}$.
The operator $W_4$ has the matrix representation,
\begin{align}
    W_4=\begin{bmatrix}
    1 &  0 & 0 & 0 \\
    0 &  1 & 0 & 0 \\ 
    0 &  0 &-1 & 0 \\
    0 &  0 & 0 &-1 \end{bmatrix}\label{zhat}
\end{align}
We can distinguish and correct all the first-order and some second-order errors using the primary syndromes obtained from measuring the $Z$-type stabilizers and the secondary syndromes from the additional measurements.

\section{Summary and Future Directions}\label{sec:conclusion}

In this work, we describe a four-qudit quantum error correcting code tailored to protect against amplitude-damping noise. We show that this code satisfies a special form of the approximate QEC conditions with the deviation term being of $\cO(\gamma^{2})$ where $\gamma$ is the probability of a single jump or single-damping event. 
Apart from the stabilizer generators, we identify a set of additional measurements that are required to realize a noise-adapted recovery scheme that corrects for all single-qudit and some two-qudit errors, thus achieving the desired order ($\cO(\gamma^{2})$) of fidelity.

 This code includes the well-known $4$-qubit amplitude-damping code~\cite{Leung} as a special case. Similar to the $4$-qubit case~\cite{fletcher_channel, Aj}, we see that the syndrome-based recovery for the $4$-qutrit and $4$-quqad codes also requires a two-step syndrome extraction procedure with two primary and two secondary syndrome measurements.

However, for $d\geq 5$, we do not need any secondary syndromes to distinguish the single-qudit errors of damping strength up to $\cO(\gamma^2)$, so the number of syndrome measurements does not grow with the local dimension of the code.

To benchmark the performance of the syndrome-based recovery, we consider the specific case of the $4$-qutrit code and calculate both the wort-case and entanglement fidelities for this case. We then compare these fidelities with those achieved by the Petz recovery map for the same code and noise channel. Interestingly, in a departure from the qubit case, we find that the syndrome-based recovery method yields higher fidelities than the Petz recovery. We believe that this trend should hold true for higher dimensions as well, since for all $d\geq 3$, this class of codes corrects some subset of two-qudit errors in addition to the single-qudit errors.

Finally, we have generalized the $4$-qudit code to a family of $[(2M+2), M]$ qudit codes, which are capable of correcting all single-qudit and a few two-qudit AD errors with similar secondary syndrome measurements as the four-qudit code.

Going forward, this $4$-qudit code and the associated syndrome-based recovery presented here could provide a promising avenue for exploring fault-tolerant quantum computation using qudits that predominantly undergo amplitude-damping noise. We know from the work of Akshaya \emph{et al..} \cite{Aj} that the $4$-qubit code can be used to achieve fault tolerance against amplitude-damping noise, with a finite pseudothreshold in the noise strength $\gamma$. It would, therefore,
be interesting to study how such a pseudothreshold scales
with the local system dimension $d$, using the codes and the
syndrome-based recovery discussed for the $[4, 1]_d$ code. We
can also study if a similar fault-tolerance threshold exists for
the $[2M + 2, M]_d$ code and how such a threshold may scale
with $M$ and $d$.

\begin{acknowledgments}
We would like to acknowledge the grant from the Mphasis F1 Foundation to CQuICC. We are thankful to Akshaya Jayashankar for many insightful discussions and Shrikant Utagi and Vismay Joshi for helpful comments. We are also grateful to the anonymous referees for their valuable comments.   

\end{acknowledgments}

\bibliography{apssamp}

\appendix
\begin{widetext}

\section{Proof of Lemma \ref{lem: equiv}}\label{sec:app_leung}
Given a code with codewords $\{|i_{L}\rangle, i=1,\ldots,d\}$, and a set of Kraus operators $\{E_{k}\}$ that satisfy Eqs.~\eqref{eq:aqec_leung)} and~\eqref{eq:aqec_orth}, we first put the two conditions together to get the following form of the AQEC conditions 
\begin{equation}
    \langle i_L| E_{k}^{\dag} E_l |j_L \rangle = \left( c_{kk} + \langle i_{L} |B_{kk} |j_{L}\rangle \right) \delta_{kl}\delta_{ij}. \label{eq:app_aqec}
\end{equation}  
Polar decomposition of the Kraus operators on the codespace leads to unitaries $U_{k}$, given by
\begin{align}\label{eq:app_polar_decomp}
    E_{k} P &= U_{k}\sqrt{PE_k^{\dag}E_kP} = U_{k}P\sqrt{PE_k^{\dag}E_kP}
\end{align}
It follows from the orthogonality in the $(k,l)$ indices in Eq.~\eqref{eq:app_aqec} that the unitaries $\{U_{k}\}$ satisfy,
\begin{align*}
    PU_k^{\dag}U_lP & \propto \delta_{kl}P
\end{align*}
The recovery due to Leung \emph {et al.} is then defined via the operators $R^{(L)}_{k} = PU_{k}^{\dagger}$. This can be rewritten as,

 
\begin{align}
    R_{k}^{(L)} &= PU_{k}^{\dag} = \left( \sqrt{PE_k^{\dag}E_kP} \right)^{-1}PE_k^{\dag} \nonumber \\
     &=\left(\sum\limits_{i,j=0}^{d-1}\ket{i_L}\bra{j_L}\bra{i_L}E_k^{\dag}E_k\ket{j_L}\right)^{-\frac{1}{2}}P E_k^{\dag} .\label{eq:inter1}
     \end{align}
     Now, if we also invoke the orthogonality in the codeword indices $(i,j)$, we see that the Leung \emph{et al.} recovery operators can be written as,
     \begin{equation}
   R_{k}^{(L)} =\sum\limits_{i=0}^{d-1}\frac{\ket{i_L}\bra{i_L} E_k^{\dag}}{\sqrt{\bra{i_L}E_k^{\dag}E_k\ket{i_L}}} ,\label{eq:inter2} 
\end{equation}
where we have used $\bra{i_L}E_k^{\dag}E_k\ket{j_L} = \bra{i_L}E_k^{\dag}E_k\ket{i_L} \delta_{ij} $. Recalling the definition of the Cafaro \emph{et al.} recovery operators $R_{k}^{(C)}$ from Eq.~\eqref{eq:cafaro_kraus}, we see that the R.H.S. of Eq.~\eqref{eq:inter2} is indeed the same as $R_{k}^{(C)}$, showing that the two recovery maps are indeed identical. 

Conversely, for a set of codewords $\{|i_{L}\rangle\}$ and Kraus operators $\{E_{k}\}$ for which the two set of recovery operators $\{R_{k}^{(L)}\}$ and $\{R_{k}^{(C)}\}$ are identical, we must have,
\[R_{k}^{(L)} = PU_{k}^{\dagger} = \sum\limits_{i=0}^{d-1}\frac{\ket{i_L}\bra{i_L} E_k^{\dag}}{\sqrt{\bra{i_L}E_k^{\dag}E_k\ket{i_L}}} .\]
From the definition of the unitaries $\{U_{k}\}$ in Eq.~\eqref{eq:inter1}, we see that this is true only if $\bra{i_L}E_k^{\dag}E_k\ket{j_L} \propto \delta_{ij}$, thus proving Lemma \ref{lem: equiv}.

\section{ {Connection with distance-two qudit surface code}}\label{appendix:D}
Here, we point out an interesting connection between the $4$-qudit code defined in Eq.~\eqref{eq:4qdt_code} and a certain class of surface codes. The surface code is a variant of Kitaev's toric code~\cite{KITAEV20032} and belongs to the class of topological quantum codes defined on a lattice, where the qubits are placed on the edges of the lattice, and the stabilizer generators are defined using plaquette and vertex operators.
Surface codes \cite{fowler2012} are toric codes without periodic boundary conditions and have a similar stabilizer structure. 

It is known that the $4$-qubit code that corrects for single amplitude-damping errors can be mapped to one of the distance-$2$ qubit surface codes \cite{Andersen2020}. We now show that a similar connection exists between the $4$-qudit codes for amplitude-damping and one of the distance-$2$ qudit surface codes.

The distance-2 qudit surface code can be defined using four different sets of stabilizer generators, namely,

 \begin{align}
     \begin{split}
    &\mathcal{S}_a = \{X_dX_dX_dX_d, Z_dZ^{-1}I_dI_d, I_dI_dZ_dZ^{-1}\}, \qquad \mathcal{S}_b = \{X_dX^{-1}_dX_dX^{-1}_d, Z_dZ_dI_dI_d, I_dI_dZ_dZ_d\}, \\
     &\mathcal{S}_c = \{Z_dZ_dZ_dZ_d, X_dX_d^{-1}I_dI_d, I_dI_dX_dX^{-1}_d\}, \qquad \mathcal{S}_d = \{Z_dZ^{-1}_dZ_dZ^{-1}_d, X_dX_dI_dI_d, I_dI_dX_dX_d\},
     \end{split}
 \end{align}
 each leading to a different set of codewords. All four sets of stabilizers can detect any single-qudit Pauli error. However, when the noise is non-Pauli, the performance differs for different codes. We first note that the $4$-qudit codewords in Eq. \eqref{eq:4qdt_code} are stabilized by the generators of form $\mathcal{S}_a$.  Furthermore, we can explicitly write down the codewords stabilized by $\mathcal{S}_b$ as,
\begin{eqnarray}\label{eq:new_code}
 \ket{m_L} =   \frac{1}{\sqrt{d}}\sum_{i=0}^{d-1} \ket{i}_1 \ket{((d-1)i)_d}_2 \ket{(i+m)_d}_3 \ket{((d-1)(i+m))_d}_4 , \nonumber
\end{eqnarray}
for all $m\in[0,d-1]$. It can be easily checked that the codewords in Eq.~\eqref{eq:new_code} satisfy the Knill-Laflamme conditions in Eq.~\eqref{a1} up to the first order in damping strength $\gamma$, similar to the code in Eq.~\eqref{eq:4qdt_code}. Thus, we see that codes with stabilizer generators $\mathcal{S}_a$ and $\mathcal{S}_b$ are indeed amplitude-damping QEC codes, correcting up to first order in noise strength.

A similar exercise shows that the codewords corresponding to the stabilizers in the sets $\mathcal{S}_c$ and $\mathcal{S}_d$ do not satisfy the Knill-Laflamme conditions even up to the first order of damping strength for the qudit amplitude-damping channel. Thus, the distance-$2$ stabilizer codes associated with the generators in $\mathcal{S}_c$ and $\mathcal{S}_d$ do not correspond to noise-adapted codes for amplitude-damping noise.   
Finally, one important point to note is that even for the codewords associated with the sets $\mathcal{S}_a$ and $\mathcal{S}_b$, the syndromes measured by these stabilizer generators cannot distinguish all the correctable amplitude-damping errors. Hence, additional measurements are required to determine the exact error that occurred, as described in Sec.~\ref{sec:stabilizer for the code} of the main text.

\section{Proof of Theorem \ref{thm:qudit_qec}}{\label{sec:proof}}
The errors on the four-qudit system have the form $A_{ijkl} = A_i \otimes A_j \otimes A_k \otimes A_l$ where the error $A_{ijkl}$ occurs with probability $\cO(\gamma^{i+j+k+l})$. Consider the action of the single-qudit errors in the set $\cA_{\rm corr}$ on the codewords defined in Eq.~\eqref{eq:4qdt_code}. 
For all $m,x \in [0,d-1]$, we note that,
\begin{align}
    A_{x000} \ket{m_L} &= \frac{1}{\sqrt{d}} \sum_{i=x}^{d-1} \sqrt{\binom{i}{x}} (1-\gamma)^{i + (i+m)_d} \left(\frac{\gamma}{1-\gamma}\right)^{\frac{x}{2}}   \ket{i-x}_1 \ket{i}_2 \ket{(i+m)_d}_3 \ket{(i+m)_d}_4 \label{eq:3aa1}\\
    A_{0x00} \ket{m_L} &= \frac{1}{\sqrt{d}} \sum_{i=x}^{d-1} \sqrt{\binom{i}{x}} (1-\gamma)^{i + (i+m)_d} \left(\frac{\gamma}{1-\gamma}\right)^{\frac{x}{2}}   \ket{i}_1 \ket{i-x}_2 \ket{(i+m)_d}_3 \ket{(i+m)_d}_4 \label{eq:3aa2}\\
    A_{00x0} \ket{m_L} &= \frac{1}{\sqrt{d}} \sum_{(i+m)_d=x}^{d-1} \sqrt{\binom{(i+m)_d}{x}} (1-\gamma)^{i + (i+m)_d} \left(\frac{\gamma}{1-\gamma}\right)^{\frac{x}{2}}  \ket{i}_1 \ket{i}_2 \ket{(i+m)_d-x}_3 \ket{(i+m)_d}_4 \label{eq:3aa3}\\
    A_{000x} \ket{m_L} &= \frac{1}{\sqrt{d}} \sum_{(i+m)_d=x}^{d-1} \sqrt{\binom{(i+m)_d}{x}} (1-\gamma)^{i + (i+m)_d}  \left(\frac{\gamma}{1-\gamma}\right)^{\frac{x}{2}} \ket{i}_1 \ket{i}_2 \ket{(i+m)_d}_3 \ket{(i+m)_d-x}_4. \label{eq:3aa4}
\end{align}

We can now prove that  {the class of $4$-qudit codes studied here} satisfy the Knill-Laflamme condition for error correction up to $\cO({\gamma})$ by taking the inner products of the states in Eqs. \eqref{eq:3aa1} through~\eqref{eq:3aa4}. 

We first consider a pair of single-qudit damping errors on different qudit locations.  
\begin{align}
    &\bra{m_L} A^{\dagger}_{x000} A_{000y} \ket{n_L}\nonumber\\
    &= \frac{1}{d}\sum\limits_{i=x,(j+n)_d=y}^{d-1} \sqrt{{i \choose x}{(j+n)_d \choose y}} (1-\gamma)^{i+j+(i+m)_d+(j+n)_d}\left(\frac{\gamma}{1-\gamma}\right)^{\frac{x+y}{2}} \nonumber \delta_{i-x,j} \delta_{i,j} \delta_{(i+m)_d,(j+n)_d} \delta_{(i+m)_d,(j+n)_d-y} \nonumber\\
    &= \frac{1}{d} \sum\limits_{i=x}^{d-1} \sqrt{{i \choose x}{(i+n)_d \choose y}} (1-\gamma)^{2i+2(i+m)_d} \left(\frac{\gamma}{1-\gamma}\right)^{\frac{x+y}{2}} \delta_{i-x,i} \delta_{(i+m)_d,(i+n)_d} \delta_{(i+m)_d,(i+n)_d-y} \label{eq:A2}
\end{align}
As $\delta_{i-x,i} \delta_{(i+m)_d,(i+n)_d} \delta_{(i+m)_d,(i+n)_d-y} = \delta_{m,n} \delta_{x,0} \delta_{y,0}$, the R.H.S of the Eq.~\eqref{eq:A2} is non-zero if and only if $m=n$ and $x=y=0$. 
Therefore, single-qudit errors occurring in different locations map the codewords in Eq.~\eqref{eq:4qdt_code} to mutually orthogonal states.  

Next, we consider pairs of single-qudit errors occurring in the same location.
\begin{align}\label{eq:app1}
    \bra{m_L}A_{x000}^{\dag}A_{y000}\ket{n_L}&= \frac{1}{d}\sum_{i=x,j=y}^{d-1} \sqrt{\binom{i}{x} \binom{j}{y}} (1-\gamma)^{i+j+(i+m)_d+(j+n)_d}  \left(\frac{\gamma}{1-\gamma} \right)^{\frac{x+y}{2}} \delta_{i-x,j-y} \delta_{i,j} \delta_{(i+m)_d,(j+n)_d} \nonumber\\
    &= \frac{1}{d}\sum_{i=x}^{d-1} \sqrt{\binom{i}{x} \binom{i}{y}} (1-\gamma)^{2i+(i+m)_d+(i+n)_d}  \left(\frac{\gamma}{1-\gamma} \right)^{\frac{x+y}{2}} \delta_{i-x,i-y} \delta_{(i+m)_d,(i+n)_d} \nonumber\\
    &= \frac{1}{d}\sum_{i=x}^{d-1} \binom{i}{x} (1-\gamma)^{2i+2(i+m)_d}  \left(\frac{\gamma}{1-\gamma} \right)^{x} \delta_{x,y} \delta_{m,n} \nonumber\\
    &= \frac{1}{d}\sum_{i=x}^{d-1} \gamma^x {i \choose x} \left(1-( p-x) \gamma+ (\frac{p(p-1)}{2}+\frac{x(x-1)}{2} - xp)\gamma^2 + \cO(\gamma^3) \right) \delta_{x,y} \delta_{m,n}. 
\end{align}
where, $p = 2i+2(i+m)_d$. The presence of $\delta_{x,y}$ and $\delta_{m,n}$ in Eq.~\eqref{eq:app1} confirms again that different single-qudit errors occurring at the same location also map the codewords in Eq.~\eqref{eq:4qdt_code} to orthogonal states.  

For the cases when $x=y$ in Eq.~\eqref{eq:app1}, we now calculate the coefficients of different orders of $\gamma$, as follows.
\begin{itemize}
\item The no-damping error operator $A_{0000}$ ($x=0$)
\begin{eqnarray}
    \bra{m_L}A_{0000}^{\dag}A_{0000}\ket{n_L} &= \frac{1}{d}\sum_{i=0}^{d-1} \left( 1- p \gamma  + \frac{p(p-1)}{2} \gamma^2 + \cO(\gamma^3) \right)  \delta_{m,n} \nonumber\\
    &= \left( 1 - 2(d-1) \gamma + \frac{1}{d} \sum_{i=0}^{d-1}  (i+(i+m)_d)(2i+2(i+m)_d-1) \gamma^2 + \cO(\gamma^3) \right)  \delta_{m,n}  \label{eq:no_damp}
\end{eqnarray}
\item The single-qudit, single-damping error operator $A_{1000}$ ($x=1$)
\begin{align}
    \bra{m_L}A_{1000}^{\dag}A_{10000}\ket{n_L} &= \frac{1}{d}\sum_{i=1}^{d-1} \gamma i \left(1-( p-1) \gamma+ (p +\frac{p(p-1)}{2})\gamma^2 + \cO(\gamma^3) \right) \delta_{m,n} \nonumber\\
    &= \left( \frac{d-1}{2} \gamma + \frac{1}{d} \sum_{i=1}^{d-1} i(1 - 2i - 2(i+m)_d) \gamma^2 + \cO(\gamma^3) \right)  \delta_{m,n} \label{eq:1_damp}
\end{align}
\item The single-qudit two-damping error operator $A_{2000}$ ($x=2$)
\begin{align}
    \bra{m_L}A_{2000}^{\dag}A_{20000}\ket{n_L} &= \left( \frac{1}{2d} \sum_{i=2}^{d-1}  i(i-1) \gamma^2 + \cO(\gamma^3) \right)  \delta_{m,n} \nonumber \\
    &= \left( \frac{1}{6} (d-1)(d-2) \gamma^2 + \cO(\gamma^3) \right)  \delta_{m,n} \label{eq:2_damp}
\end{align}
\item For single-qudit errors with a higher degree of damping, that is, $A_{x000}$ with $x \geq 3$, we get, 
\[\bra{m_L} A_{x000}^{\dag} A_{x0000} \ket{n_L} = \cO(\gamma^x) \delta_{m,n}.\]
\end{itemize}
 
Therefore, we see that the coefficient of $\gamma$ is independent of $m$, whereas the coefficient of $\gamma^2$ depends on the value of $m$ for $x \leq 2$. This shows that the codewords in Eq.~\eqref{eq:4qdt_code} satisfy 
the Knill-Laflamme conditions up to $\cO(\gamma)$ for all the single-qudit damping errors.

Finally, we consider the two-qudit errors $\{A_{1010}, A_{1001}, A_{0101}, A_{0110}\}$ in the set $\cA_{\rm corr}$. The action of these errors on the codewords is given by,
\begin{align}
    A_{1010} \ket{m_L} &= \frac{1}{\sqrt{d}} \sum_{i=1}^{d-1} \Theta[(i+m)_d-1] i (i+m)_d (1-\gamma)^{i+(i+m)_d} \left(\frac{\gamma}{1-\gamma}\right) \ket{i-1}_1 \ket{i}_2 \ket{(i+m)_d-1}_3 \ket{(i+m)_d}_4 \label{eq:3b1}\\
    A_{1001} \ket{m_L} &= \frac{1}{\sqrt{d}} \sum_{i=1}^{d-1} \Theta[(i+m)_d-1] i (i+m)_d (1-\gamma)^{i+(i+m)_d} \left(\frac{\gamma}{1-\gamma}\right) \ket{i-1}_1 \ket{i}_2 \ket{(i+m)_d}_3 \ket{(i+m)_d-1}_4 \label{eq:3b2}\\
    A_{0101} \ket{m_L} &= \frac{1}{\sqrt{d}} \sum_{i=1}^{d-1} \Theta[(i+m)_d-1] i (i+m)_d (1-\gamma)^{i+(i+m)_d} \left(\frac{\gamma}{1-\gamma}\right) \ket{i}_1 \ket{i-1}_2 \ket{(i+m)_d}_3 \ket{(i+m)_d-1}_4 \label{eq:3b3}\\
    A_{0110} \ket{m_L} &= \frac{1}{\sqrt{d}} \sum_{i=1}^{d-1} \Theta[(i+m)_d-1] i (i+m)_d (1-\gamma)^{i+(i+m)_d} \left(\frac{\gamma}{1-\gamma}\right) \ket{i}_1 \ket{i-1}_2 \ket{(i+m)_d-1}_3 \ket{(i+m)_d}_4 \label{eq:3b4}
\end{align}
Here, $\Theta[x]$ is the Heaviside step function that takes value 1 when $x \geq 0$ and 0 otherwise.   
It is evident that all the noisy states in Eqs. \eqref{eq:3b1}-~\eqref{eq:3b4} are orthogonal to each other. Furthermore, it is easy to check that the noisy states in Eqs.~\eqref{eq:3b1}-~\eqref{eq:3b4} are also orthogonal to the noisy states in Eqs.~\eqref{eq:3aa1}-~\eqref{eq:3aa4} obtained due to the single-qudit errors.

Now, we consider the following inner product, where $A_{1010}$ acts on different logical states.
\begin{align} \label{eq:3b5}
    \bra{m_L} A_{1010}^{\dag} A_{1010} \ket{n_L} = \frac{1}{d} \sum_{i=1}^{d-1} \Theta[(i+m)_d -1] i^2 \gamma^2 (i+m)_d^2 (1-\gamma)^{2i + 2(i+m)_d -2} \delta_{m,n}
\end{align}
We get identical expressions for the other two-qudit errors as well. The $\delta_{m,n}$ term in Eq.~\eqref{eq:3b5} verifies that the codewords get mapped to orthogonal states by different two-qudit error operators. The leading order term in Eq.~\eqref{eq:3b5} is $\cO(\gamma^2)$, and its coefficient does depend on the codewords.

\section {Ququad code for Amplitude-damping} \label{appendix:a1}

When the local dimension of the codewords in Eq.~\eqref{eq:4qdt_code} is four, the primary syndromes $(p_{1},p_{2})$ obtained by measuring the stablilizer generators in Eq.~\eqref{eq:stab_gen} alone cannot differentiate between the pair of errors  $(A_{2000}, A_{0200})$ and the pair of errors $(A_{0020}, A_{0002})$. However, it is possible to distinguish between these errors by utilising secondary syndrome measurements, as in the case of $d=3$.

We measure the secondary syndromes $(s_{1}, s_{2})$ for the $4$-ququad code by using the additional operators $W_4III$ and $IIW_4I$.
The matrix representation of the operator $W_4$ is given in Eq. \eqref{zhat}.
To execute these measurements, we attach an ancilla qubit and perform the following controlled operation. 

\begin{align}
     C_4=\begin{bmatrix}
    \mathbf{I}_2 & 0 & 0 & 0\\
    0 & \mathbf{I}_2 & 0 & 0\\ 
    0 &  0 & \mathbf{X}_2& 0\\
    0 &  0 & 0 & \mathbf{X}_2\end{bmatrix}\label{cx4}.
\end{align}

Here, $\mathbf{X}_2$ is the Pauli $X$ operator for qubits. The controlled operation $C_4$ flips the ancilla target qubit whenever the control state is $\ket{2}$ or $\ket{3}$.

From Table \ref{t21}, we note that the secondary syndromes $s_1, s_2$ distinguish the errors whose primary syndromes $p_1, p_2$ are identical.
\begin{table}[ht]
    \centering
   \begin{tabular}{ |p{1.75cm}||p{2.1cm}|p{2.1cm}| }
 \hline
\multicolumn{3}{|c|}{Table of syndromes for ququads }\\
\hline
 
 Errors & $\qquad p_1, p_2$ & $\qquad s_1, s_2$ \\
 \hline
 $A_{1000}$   &  3,0  &  $\times$     \\
 $A_{0100}$   &  1,0  &  $\times$     \\
 $A_{2000}$   &  2,0  &  0,0  \, {\rm or} (0,2)   \\
 $A_{0200}$   &  2,0  &  1,0  \, {\rm or} (1,2)  \\
 $A_{0010}$   &  0,3  &  $\times$      \\
 $A_{0001}$   &  0,1  &  $\times$      \\
 $A_{0020}$   &  0,2  &  0,0   \, {\rm or} (2,0)    \\
 $A_{0002}$   &  0,2  &  0,1   \, {\rm or} (2,1)    \\
 $A_{1010}$   &  3,3   & $\times$        \\
 $A_{1001}$   &  3,1   & $\times$       \\
 $A_{0101}$   &  1,1   & $\times$        \\
 $A_{0110}$   &  1,3   & $\times$        \\
 \hline
\end{tabular}
    \caption{Primary and secondary syndrome string for different errors with probability up to $\cO(\gamma^2)$ for the ququad code.}
    \label{t21}
\end{table}

\end{widetext}

\end{document}